\documentclass[conference]{IEEEtran}
\IEEEoverridecommandlockouts
\usepackage{amsmath,amsfonts}

\usepackage{algorithm}
\usepackage{algorithmic}
\usepackage{amssymb}
\usepackage{multirow}
\usepackage{listings}%
\usepackage{tabularx}
\usepackage{textcomp}
\usepackage{stfloats}
\usepackage{url}
\usepackage{verbatim}
\usepackage{amsthm}
\usepackage{graphicx}
\usepackage{bbm}
\usepackage{cite}
\usepackage{color}
\usepackage[bookmarks=false]{hyperref}
\usepackage[letterpaper, top=0.7in, bottom=1in, left=0.625in, right=0.625in]{geometry}
\UseRawInputEncoding  
\usepackage{subcaption}
\usepackage{booktabs}
\usepackage{caption}
\usepackage{subcaption}
\newtheorem{assumption}{Assumption}
\newtheorem{theorem}{Theorem}

\newtheorem{corollary}{Corollary}

\renewcommand{\algorithmicrequire}{\textbf{Input:}}

\usepackage[T1]{fontenc}
\AtBeginDocument{
  \setlength{\abovedisplayskip}{3pt}
  \setlength{\belowdisplayskip}{3pt}
  \setlength{\abovedisplayshortskip}{2pt}
  \setlength{\belowdisplayshortskip}{2pt}
}
\begin{document}

\title{Consensus-based Decentralized Distributed Swarm Learning with Heterogeneous Big Data
\thanks{This work was supported in part by the National Science Foundation grants \#2146497, \#2231209, \#2244219, \#2315596, \#2343619, \#2349878, \#2416872, \#2413622, \#2533587, \#2548961, \#2551418, and \#2551417.} 
}
\author{\IEEEauthorblockN{Zhuoyu Yao$^\dagger$, \; Dong Yang$^\dagger$, \; Yue Wang$^\dagger$, \; Songyang Zhang$^*$, \; Yingshu Li$^\dagger$,    \; Zhi Tian$^\star$, \; Zhipeng Cai$^\dagger$}

\IEEEauthorblockA{$^\dagger$Department of Computer Science, Georgia State University, Atlanta, GA, USA\\$^*$Department of Electrical and Computer Engineering, University of Louisiana at Lafayette, LA, USA\\
$^\star$Department of Electrical and Computer Engineering, George Mason University, Fairfax, VA, USA}}

\maketitle

\begin{abstract}
Artificial intelligence increasingly relies on large-scale, distributed, and heterogeneous data collected by edge devices. 
However, the practice of edge intelligence remains challenging due to non-convex objectives, data heterogeneity, and complex wireless network topology. 
To address these issues, this paper proposes 
a consensus-based decentralized distributed swarm learning (CD-DSL) framework for wireless edge networks. 
Our CD-DSL integrates consensus optimization with particle swarm optimization (PSO), 
by reaching  
the 
model consensus among neighboring devices while leveraging 
the PSO exploration and exploitation. 
The consensus mechanism supports decentralized coordination without raw-data exchange, while PSO-inspired updates utilize historical and neighbor-shared experience to enhance exploration 
for non-convex optimization, improve robustness to data heterogeneity, and accelerate convergence.
We further develop an adaptive neighbor-mixing strategy that learns performance-aware consensus weights, improving decentralized collaboration among heterogeneous edge devices. 
Theoretical analysis establishes that CD-DSL maintains participant consistency and achieves non-ergodic convergence to a neighborhood of a stationary point under non-convex objectives.  
Experimental results show that CD-DSL can mitigate the performance degeneration of existing decentralized baselines  
caused by heterogeneous data.

\end{abstract}
\begin{IEEEkeywords}
Heterogeneous big data, consensus optimization, decentralized edge network, distributed swarm learning, non-convex optimization, convergence analysis.
\end{IEEEkeywords}

\section{Introduction}
The rapid growth of edge intelligence in wireless networks is shifting learning and inference from centralized platforms to large-scale edge devices, enabling low-latency and real-time intelligent services. 
However, big data generated by drones, sensors, vehicles, and mobile terminals  are inherently large-scale and highly heterogeneous, rendering centralized processing  impractical due to the prohibitive overhead for  data transmission and computation. 
Distributed learning avoids raw data collection, but 
experiences performance degradation under non-IID data and non-convex objectives~\cite{zhou2025distributed}.


Federated learning (FL) has been widely investigated to address data heterogeneity in distributed settings~\cite{wang2022accelerating}.  
Unfortunately, most pioneering FL frameworks hinge on a central server for model aggregation and coordination\cite{fan2021joint}, which is vulnerable to node/link failures and 
faces communication bottlenecks as well as concerns regarding trust in the central server\cite{yang2025physics}.
These limitations  motivate a more robust and efficient decentralization for edge intelligence\cite{xu2021coke}. However, replacing the central server with peer-to-peer collaboration does not automatically produce effective decentralized learning. 
Without a global coordinator, decentralized learning suffers from slower convergence and weaker agreement guarantees, especially with  large-scale devices and heterogeneous data.


In 
edge networks, heterogeneous device capabilities, asymmetric communication links, and dynamic wireless conditions 
lead to directed and imbalanced information flows~\cite{wu2025effectiveness}. 
Thus, conventional consensus-based learning methods may suffer from biased aggregation, degraded convergence, and limited robustness under realistic wireless network conditions.
While  decentralized learning under non-convex objectives has been studied in~\cite{yuan2016convergence,  kong2021consensus}, theoretical analysis remains underexplored. Even worse, data and system heterogeneity are intrinsic characteristics of wireless edge networks, where heterogeneity becomes more challenging in decentralized settings because each device only has a limited local perspective and cannot rely on a central server to provide a common evaluation criterion. 
Existing decentralized stochastic gradient descent (DSGD) methods primarily focus on improving local update rules~\cite{patel2024limits}, but 
not aiming at promoting 
collaboration among heterogeneous participants.
As a result, when local data distributions are highly non-IID, devices may produce conflicting model updates, causing slow convergence, poor generalization, and vulnerability to local traps in non-convex cases. 


Our recent 
work 
of 
distributed swarm learning (DSL) suggests an alternative 
direction to improve efficiency, robustness, and convergence~\cite{wang2024distributed, fan2023cb, yao2025dslota}.
By integrating AI with biologically inspired swarm intelligence, DSL leverages local historical experience and collaborators' information to improve learning efficiency under heterogeneous data distributions~\cite{wang2024distributed, yao2025multi}.
However, DSL is not originally developed for 
decentralized wireless networks as the existing designs often depend on a shared evaluation perspective to score particles or identify the best-performing model, which is not effectively 
available in serverless decentralized systems. 
Therefore, standard DSL lacks a principled aggregation mechanism for weighting neighbors over multi-hop communication graphs.

To address the above challenges, this paper proposes a consensus-based decentralized distributed swarm learning (CD-DSL) framework for robust and efficient edge intelligence systems with heterogeneous big data. 
CD-DSL extends distributed learning to fully decentralized edge systems by integrating consensus optimization with swarm learning, achieving faster convergence, efficient exploration and enhanced exploitation. We further develop an adaptive neighbor-mixing strategy to improve exploitation and collaboration under heterogeneous input data. Meanwhile, the swarm-intelligence component enables exploration to escape local optima for non-convex scenarios. Our contributions are listed as follows.

\vspace{-0.3mm}
\begin{enumerate}
\item  
In CD-DSL, we design a bio-inspired decentralized learning paradigm, which addresses non-IID big data 
and non-convex issues by integrating consensus-based decentralized learning with DSL exploration-and-exploitation without global evaluation.

\item We develop an adaptive neighbor-mixing strategy to learn adaptive consensus weights from local validation feedback. This mechanism enables each device to evaluate neighbors' models locally and assign higher weights to more beneficial collaborators, improving decentralized learning with heterogeneous data. 

\item We incorporate a push-sum consensus mechanism to correct aggregation bias caused by stochastic mixing over directed  graphs. 
This allows CD-DSL to operate in realistic asymmetric edge networks instead of relying on impractical doubly stochastic assumptions.

\item Theoretical analysis indicates that CD-DSL encourages participant consensus and achieves non-ergodic convergence to a neighborhood of a stationary point under non-convex objectives, and characterizes how data heterogeneity and hyperparameters affect convergence.

\end{enumerate}

\noindent \emph{Notations:} Let $\mathbf{1}\in \mathbb{R}^{n\times 1}$ be an all-one column vector and $\mathbf{I}\in \mathbb{R}^{n\times n}$ represent an identity matrix. We apply an event indicator $\mathbb{I}_{\left( {f_{i,t-1} > f_{i,t}} \right)}$, whose expectation over time $t$ describes the probability of ${f_{i,t-1} > f_{i,t}}$. 
Given a matrix  $\mathbf{A}\in \mathbb{R}^{n\times n}$, $\|\mathbf{A}\|$ is the spectral norm. And $\|\mathbf{x}\|$ denotes the Euclidean norm of a vector $\mathbf{x}\in \mathbb{R}^{n\times 1}$. 
Hadamard product and division of two matrices are expressed as $\mathbf{A} \circ \mathbf{B}$ and $\mathbf{A} \oslash \mathbf{B}$.

\section{CD-DSL: Consensus-based Decentralized Distributed Swarm Learning}\label{sec:Model}

This section formulates the decentralized learning problem, followed by a consensus mechanism design for directed topologies~\cite{yuan2016convergence}, which leads to our CD-DSL algorithm design. 

\vspace{-0.05in}
\subsection{Problem Formulation}
\label{subsec:problem}

Consider a decentralized wireless network with $K$  devices collaboratively performing a shared task. 
Its 
 topology can be represented as a strongly connected graph $\mathcal{G}=\left(\mathcal{V},\mathcal{E}\right)$ with the set of devices $\mathcal{V}$ and that of wireless communication links $\mathcal{E}$. 
The adjacency matrix $\mathbf{A}=\{a_{i,j}\}_{K\times K}$ characterizes the peer-to-peer communication accessibility, in which  $a_{i,j}=1$ denotes a reliable communication link from device $i$ to device $j$. Devices can only reach out their one-hop neighbors  to share model-training local updates 
(raw data query is prohibited). We set $a_{i,i}=1, \forall i$ for self-accumulated training gains from local historical experiences. Each participant maintains a local learning model $M_i(\mathbf{w}_i,D_i),\mathbf{w}_i \in \mathbb{R}^{N \times 1},i=1,\ldots,K$ with parameter vector $\mathbf{w}_i$ and local observation dataset $D_i$. 
It is worth noting that: caused by the biased local observations and heterogeneous data distributions, standalone training suffers from suboptimal global solutions, due to the lack of collaboration among decentralized devices. 
It motivates establishment of a global objective of  edge-intelligence systems, aiming a general model $\overline{\mathbf{w}}$ to mitigate heterogeneity impact: 
\begin{flalign} 
    &&  \overline{\mathbf{w}}& \triangleq  \arg\min_{\mathbf{w}}{\frac{1}{K}\sum\limits_{i=1}^{K}f_i(\mathbf{w},D_i)}, & \label{eq:objective}  
\end{flalign}
where loss function $f_i$ is evaluated based on local dataset $D_i$.  Local model updates follow a weighted local decentralized gradient descent mechanism for the generalization of $\overline{\mathbf{w}}$:
\begin{flalign}
    &&  {\mathbf{w}_{i,t+1}} =& \sum\limits_{k \in \mathcal{N}_i \cup \{i\}} {p_{k,i}\mathbf{w}_{k,t} } - \alpha_i \nabla f(\mathbf{w}_{i,t},D_i),\forall i, & \label{eq:concensus_updates}  
\end{flalign}
where $\alpha_i$ denotes the learning rate of $i$-th model, $p_{k,i}$ is a trainable parameter representing the alignment of neighboring devices $k$ to the device $i$. 
$ \mathcal{N}_i \cup \{i\}$ represents the neighborhood of device $i$, where $\mathcal{N}_i=\mathcal{N}_i^{in}\cup\mathcal{N}_i^{out}$ includes both in-degree and out-degree neighbors.  
For directed networks, $\mathcal{N}_i^{in}$ denotes the set of neighbors  from whom device $i$ can reliably receive messages, whereas $\mathcal{N}_i^{out}$ includes the neighbors to whom device $i$ can transmit its local updates. 
In this sense, each client can share its own historical experience and benefit from the neighbor's current training updates. 
A state space model is defined for the global iteration of Markov training process $\mathbf{W}_{t+1}=\mathbf{W}_{t}\cdot \mathbf{P}_{K \times K}-\Lambda , i=1,\ldots,K$, where $\mathbf{P}=[\mathbf{P}_i]=\{p_{i,j}\}$ is the mixing matrix to aggregate updates from neighbors.  $\mathbf{P}$ is initialized by the column normalizing $\mathbf{A}$ with $p_{i,j}^{t}\equiv 0,\forall t$ for all unconnected node pairs. Thus, $\sum\limits_{k \in \mathcal{N}_i \cup \{i\}} {p_{k,i}}=\sum_{k =1}^{k=K} {p_{k,i}},\forall i$ and we enlarge the integration interval from devices' neighborhood to the complete set for concise expression. $\Lambda\triangleq\left(\alpha_i \nabla f(\mathbf{w}_{i,t},D_i)\right)_1^{K}\in \mathbb{R}^{N\times K}$ denotes a stacked matrix with the local gradient, where $f(\mathbf{w}_{i,t},D_i)$ can be denoted as $f_{i,t}$. 

\subsection{Push Sum Consensus}
\label{subsec:Push_sum}

Consensus optimization orchestrates coordination for consistency of distributed local training through  consensus-based constraints to encourage  cooperation 
across participants.
Given \eqref{eq:concensus_updates}, the  problem  \eqref{eq:objective} in decentralized networks becomes: 
\begin{flalign}
    &&  \arg\min_{\mathbf{W}}\frac{1}{K} \mathbf{1}^{\top}\cdot\mathbf{f}(\mathbf{W},\mathbf{D})  &\triangleq{\frac{1}{K}\sum\limits_{i=1}^{K}f_i(\mathbf{w}_i,D_i)} & \nonumber \\
    &&  s.t. \quad \mathbf{w}_i &=\mathbf{w}_j, \forall \left(i,j\right)\in \mathcal{E}, & \label{eq:consensus_formula}  
\end{flalign}
where the vector $\mathbf{f}(\mathbf{W},\mathbf{D})\in \mathbb{R}^{1\times K}$ records the agents' loss functions and gradients of model parameters $\mathbf{W} \in \mathbb{R}^{N\times K}$.

The mixing matrix $\mathbf{P}$ performs 
the consensus-based aggregation scheme. 
To guarantee convergence, an assumption is usually applied  to hold a doubly stochastic $\mathbf{P}$ in order to aggregate updates from neighbors within undirected wireless topologies~\cite{nedic2020distributed}. 
However, the data and devices' heterogeneity in edge networks leads to asymmetric interactions, violating such an assumption of the symmetric and doubly stochastic property of $\mathbf{P}$. 
To fill this gap, we next investigate a robust consensus scheme for directed topologies, which is reliable to apply for heterogeneous data in wireless edge networks.  
\begin{assumption}
\label{ass_1}
(Column stochastic mixing matrix): 
Given a  
strongly connected network topology $\mathcal{G}$, there exists an optimal mixing matrix $\mathbf{P}^{\star}=\{p_{i,j}^{\star}\}\in \mathbb{R}^{K\times K}$ is nonnegative, off-diagonal entries and column stochastic as: 
\begin{flalign}
    &&  \mathbf{1}^{\top}\mathbf{P}&=\mathbf{1}; \quad 0< p_{i,j}, \forall (i,j) \in \mathcal{E}; \quad  0\leq p_{i,j} \leq1. & \label{eq:mixing_matrix}  
\end{flalign}
where eigenvalues  follow $1=|\lambda_1\left(\mathbf{P}\right)|\geq\ldots\geq|\lambda_K\left(\mathbf{P}\right)|$. 
\end{assumption}
Apart from the existing doubly stochastic mixing matrix design~\cite{yuan2016convergence, aketi2023global}, $\mathbf{P}$ forms a column stochastic, nonnegative and off-diagonal paradigm. 
The relaxed condition of {\bf Assumption~\ref{ass_1}} than that of \cite{kong2021consensus}, not only preserves its validity, but also makes it more suitable for directed and heterogeneous wireless edge networks. 
Consensus optimization enables finding a universal model to mimic the global optimal model such that all trained models will  converge to the same neighborhood of given stationary point $\mathbf{w}_1,\ldots,\mathbf{w}_K,\overline{\mathbf{w}} \in  \left[\mathbf{w}^{\star}-\epsilon, \mathbf{w}^{\star}+\epsilon\right]$. 

Although the column stochastic aggregation graph reduces modeling complexity and executes better in directed topological structure, the substituted row stochastic incurs an imperative gap as $1/K\sum\limits_{i}\mathbf{w}_{i,\infty}\neq 1/K\sum\limits_{j}\mathbf{w}_{j,\infty}$ as $\sum_{i=1}^{i=K}p_{i,j}\neq 1$. This gap incurs a biased objective $\overline{\mathbf{w}}= \arg\min_{\mathbf{w}}{1/K\sum\limits_{i}\Pr_i f_i(\mathbf{w}_i,D_i)}$ from the optimal aggregation, where the coefficient $\Pr_i$ is attributed to the deviant stationary distribution of $\mathbf{P}$.  Inspired by the push sum consensus mechanism~\cite{yuan2016convergence}, we introduce a nonnegative scalar matrix $\mathbf{Z}=\{\mathbf{z}_i\}_1^K,\mathbf{z}\in \mathbb{R}^{N \times 1}$ to rebuild symmetric  matrix for global updating. With the initialized $\mathbf{z}_{i,0}=\mathbf{1},\forall i$, individual parameter updates follow a ratio consensus prototype:
\begin{flalign}
     && \textstyle \mathbf{W}_{t+1}&=\left(\mathbf{W}_{t}\cdot  \mathbf{P}\right) \oslash\left(\mathbf{Z}_{t}\cdot  \mathbf{P}\right) -\Lambda & \nonumber \\ 
    && \textstyle \mathbf{Z}_{t+1}&=\mathbf{Z}_{t}\cdot  \mathbf{P}=\mathbf{Z}_{0}\cdot  \mathbf{P}  & \nonumber \\
    && \textstyle \widetilde{\mathbf{W}}_{t+1}&=\left(\mathbf{W}_{t+1}\circ\mathbf{Z}_{t+1}\right)^{\circ -1},  & \label{eq:push_sum}
\end{flalign}
where   $\mathbf{Z}$ reveals the accumulated aggregated gains on the scalar matrix. $\widetilde{\mathbf{W}}$ is the ratio of parameter $\mathbf{w}_{j,t}$ and scalar $\mathbf{z}_{j,t}$ for unbiased aggregation. Devices transmit $\widetilde{\mathbf{W}}$ as an adjusted individual gain to their neighbors. The receiving devices sum up all the information with column stochastic 
$\sum_{j=1}^{j=K}p_{j,i}=1$ to promote consistency.  Based on Perron-Frobenius~\cite{tsianos2012push} validation $\mathop{\lim}\limits_{t\to \infty}\mathbf{W}_{t} \circ \mathbf{Z}_{t}^{\circ -1}=1/K\mathbf{1}\mathbf{1}^{\top}\mathbf{W}_{0}$,  consensus algorithm $\overline{\mathbf{w}}_{t+1}=1/K\sum_{i=1}^{i=K}\left({\widetilde{\mathbf{w}}_{i,t}-\alpha_i \nabla f(\widetilde{\mathbf{w}}_{i,t})}\right)$ will converge to a specific neighborhood with proper learning rates and sufficient iteration. 
The push-sum mechanism improves consensus at the deployment of decentralized learning frameworks in directed wireless topologies. 

Although such a mechanism guarantees the generalization and non-ergodic convergence of decentralized learning paradigm for edge computing, data heterogeneity and non-convex objectives in wireless networks 
still prevent consensus based decentralized learning algorithm from robust and efficient implementation. 

\subsection{CD-DSL Algorithm}
\label{subsec:DDSL}
\vspace{-0.08in}


The prevalence of non-IID observations, non-convex downstream tasks, and one-hop communication constraints calls for a robust decentralized learning framework in wireless edge networks. Network intelligence improves model consistency by exploiting experience from neighboring participants~\cite{wu2025effectiveness}. Distributed swarm learning (DSL) has demonstrated strong robustness to heterogeneity and enhanced exploration and exploitation in non-convex optimization, making it a natural complement to consensus-based decentralized learning. The learning scheme of vanilla DSL implements~\cite{ yao2025multi}: 
\begin{flalign}
    &&  {\mathbf{v}_{i,t+1}} \!=& \mathbf{w}_{i,t+1} - \mathbf{w}_{i,t} & \nonumber \\
    &&   \!=& {c_0}{\mathbf{v}_{i,t}} + {c_1}\left( {\mathbf{w}_{i,t}^l - {\mathbf{w}_{i,t}}} \right) + {c_2}\left({\mathbf{w}_t^{g} - {\mathbf{w}_{i,t}}} \right) & \nonumber \\
    && &- \alpha \nabla F\left( {{\mathbf{w}_{i,t}}, D_g} \right), & \label{eq:DSL_update}  
\end{flalign}
where $\mathbf{v}_{i,t}$ is the parameter velocity. The exploration term ${c_0}{\mathbf{v}_{i,t}}$ enables the local model to perform exploratory updates that may temporarily deviate from locally optimal directions, strengthening the ability to escape saddle points. $\mathbf{w}_{i,t}^{l}$ and $\mathbf{w}_{t}^{g}$ are the local and global best models, which exploit each device's historical best solution and valuable experience from neighboring participants, and $c_0$, $c_1$, and $c_2$ control inertia, local, and global exploitation.

Although DSL improves robustness to heterogeneous data, its global best guidance is limited by the one-hop communication horizon in peer-to-peer networks. Facing this issue, to apply DSL in a decentralized manner, we introduce a consensus oriented neighbor best model to record the outstanding training gains from one-hop neighbors. 
The flooding-based information dissemination mechanism allows individual devices in a decentralized network to progressively obtain a global view through iterative neighbor-to-neighbor information exchanges~\cite{lim2001flooding}. 
Thanks to the long-term accumulation of neighbor-shared experience, the neighbor-best model gradually serves as an effective approximation to the global-best guide, leading the local models toward the global optimum. 
Based on the weighted parameter matrix $\widetilde{\mathbf{W}}$, the exploitation term is formulated as 
\begin{flalign}
    && \widetilde{\mathbf{W}}_{t}^l =&\left[\mathbb{I}_{\left( {f_{i,t} > f_{i,t-1}} \right)}\right]_{i=1}^K\cdot\mathbf{1}\circ\widetilde{\mathbf{W}}_{t-1} &\nonumber\\
    && &+ \left[\mathbb{I}_{\left( {f_{i,t-1} > f_{i,t}}  \right)}\right]_{i=1}^K\cdot\mathbf{1}\circ\widetilde{\mathbf{W}}_{t} &\label{eq:local_best} \\
    &&  \widetilde{\mathbf{W}}_{t}^{n}= & \widetilde{\mathbf{W}}_{t}\cdot\mathbf{P}, & \label{eq:neighbor_best}   
\end{flalign}
where the event indicator $b_{i,t}\triangleq \mathbb{I}_{\left( {f_{i,t-1} > f_{i,t}} \right)}$ denotes the probability that recent model on device $i$ performs better than previous one. As the training process follows a Markov chain, $\widetilde{\mathbf{W}}_{t}^l$ records the historical best parameters. Then the consensus-based decentralized DSL performs model updates: 
\begin{flalign}
    && \widetilde{\mathbf{W}}_{t+1}=& \widetilde{\mathbf{W}}_{t}\cdot\mathbf{P} -\Lambda(\widetilde{\mathbf{W}}_{t}) +{c_0}{\mathbf{V}_{t}}&\label{eq:Global_update} \\
    &&   &  + {c_1}\mathbf{B}_{t}\circ\left( {\widetilde{\mathbf{W}}_{t-1} - {\widetilde{\mathbf{W}}_{t}}} \right) + {c_2}\widetilde{\mathbf{W}}_{t}\cdot\left({\mathbf{P} - \mathbf{I}} \right), &   \nonumber
\end{flalign}
where the probability vector $\mathbf{B}=\left[\mathbb{I}_{\left( {f_{i,t-1} > f_{i,t}}  \right)}\right]\cdot\mathbf{1}\in \mathbb{R}^{K\times K}$ represents a binary matrix to formulate accumulated training gains, ${\mathbf{V}_{t}=\left[\mathbf{v}_{i,t}\right]_{i=1}^K}$ records the exploration velocity of participants to escape from local trap. $\mathbf{I}$ represents the identity matrix. For each local update:
\begin{flalign}
    && {\widetilde{\mathbf{w}}_{i,t+1}} =& (1+c_2)\sum\limits_{j} {p_{j,i}\widetilde{\mathbf{w}}_{j,t}} - \alpha_i \nabla f_{i,t}+{c_0}{\mathbf{v}_{i,t}} &\label{eq:local_update}\\
    &&   &  + {c_1}\mathbb{I}_{\left( {f_{i,t-1} > f_{i,t}} \right)}\left( {\widetilde{\mathbf{w}}_{i,t-1} - {\widetilde{\mathbf{w}}_{i,t}}} \right) - {c_2}\widetilde{\mathbf{w}}_{i,t} ,\forall i. & \nonumber
\end{flalign}
By incorporating the neighbor best model $\widetilde{\mathbf{w}}_{t}^{n}$, the mixing matrix serves not only to enforce consensus among local updates, but also to realize adaptive neighborhood aggregation. Accordingly, we design a joint optimization formulation to reflect the wireless communication topology and adaptive model comparison under heterogeneous inputs. At the current communication round $t$, client $i$ learns an adaptive incoming consensus weight vector $\mathbf{p}_i^t=\{p_{j,i}^t\}_{j=1}^K$ by evaluating neighbor models on its local validation data $D_i^{\mathrm{val}}$. The trainable consensus weights are obtained by solving
\begin{flalign}
    &&  \mathbf{p}_i^t=\arg\min_{\mathbf{p}_i}
&\sum_{j\in \mathcal{N}_i\cup\{i\}}
\left(p_{j,i} f(\mathbf{w}_{j,t},D_i) +\tau
p_{j,i}\log p_{j,i}\right) & \nonumber \\
&&  \mathrm{s.t.}\quad &\sum_{j=1}^{K}p_{j,i}=1, & \nonumber \\
&&  &p_{j,i}=0,\ \forall j\notin \mathcal{N}_i\cup\{i\}, & \nonumber \\
    && &p_{j,i}\geq 0, & \label{eq:optimization}  
\end{flalign}
where $f(\mathbf{w}_{j,t},D_i)$ measures the evaluation loss of neighbor model $\mathbf{w}_j^t$ on local validation dataset. This cross validation design provides a common evaluation of distributed model for heterogeneity measurement. $\tau>0$ controls the smoothness of the learned consensus weights. The entropy regularization $\tau\sum_{j\in \mathcal{N}_i\cup\{i\}}
p_{j,i}\log p_{j,i}$ avoids unstable hard selection of a single neighbor and preserves collaborative consensus. Based on the standard first-order optimality conditions~\cite{patel2024limits}, proposed entropy-regularized linear objective is strictly convex and yields a Gibbs-form closed-form solution as
\begin{equation}
p_{j,i}=\frac{\exp\left(-\frac{f_i(\mathbf{w}_{j,t},D_i)}{\tau}
\right)}{\sum\limits_{k\in \mathcal{N}_i\cup\{i\}}\exp\left(-\frac{f_i(\mathbf{w}_{k,t},D_i)}{\tau}\right)},\quad j\in \mathcal{N}_i\cup\{i\}.
\label{eq:softmax-p}
\end{equation}
In this way, CD-DSL forms a decentralized learning framework suitable for heterogeneous input in non-convex cases.

\vspace{-0.12in}

\subsection{Algorithm Implementation}
Algorithm~\ref{alg:CD-DSL} presents the implementation of CD-DSL. 
To reduce communication and computation overhead in peer-to-peer wireless networks, CD-DSL uses an asynchronous update schedule with $T$ communication rounds and $E$ local epochs per round. 
In each round, device $i$ receives model parameters and auxiliary variables from $\mathcal{N}_i \cup \{i\}$, evaluates neighboring models locally, and updates the mixing matrix $\mathbf{P}$ via \eqref{eq:softmax-p}. 
During local training, each device maintains its local and neighbor best parameters using \eqref{eq:local_best} and \eqref{eq:neighbor_best}, updates the model via \eqref{eq:local_update}, and transmits the updated variables to its neighbors. This asynchronous mechanism reduces synchronization overhead while maintaining decentralized collaboration.

\begin{algorithm}[!htb]
	\caption{CD-DSL}
	\label{alg:CD-DSL}
	\begin{algorithmic}[1]
\renewcommand{\algorithmicrequire}{\textbf{Initialization:}}
		\REQUIRE ~~\\
		Initialize $\overline{\mathbf{w}}_t, \mathbf{w}_{i,t}^l,\mathbf{w}_{t}^{n},\mathbf{z}_t$, $t,e,i$ and $T,E$, given datasets $\left\{ {{D_i}} \right\}$;
\\
    \FOR {each communication round $t=1:T$}

        \STATE \!\!\!\textbf{ at edge devices:$i =1,\ldots, K$}
            \STATE \hspace{0.1in} receive $\{\mathbf{w}_{j,t},{\mathbf{z}_{i,t}}\}$ from neighbors in $\mathcal{N}_i \cup \{i\}$ and calculate $\widetilde{\mathbf{w}}_{j,t}$;
            \STATE \hspace{0.1in} evaluate neighbors' models $f(\mathbf{w}_{j,t};D_i)$ via local dataset;
            \STATE  \hspace{0.1in} update $\mathbf{P}$ via \eqref{eq:softmax-p} for consensus and performance based aggregation;
            \FOR {each epoch $e=1:E$}
                \STATE   \hspace{0.1in}  update local and neighbor best via \eqref{eq:local_best} and \eqref{eq:neighbor_best}; 
                \STATE   \hspace{0.1in}  update local model via \eqref{eq:local_update}; 
                \STATE   \hspace{0.1in} transmit $\left\{ {{\mathbf{w}_{i,t + 1}},{\mathbf{z}_{i,t + 1}}} \right\}$ to neighbors;
            \ENDFOR
    \ENDFOR
	\end{algorithmic}
\end{algorithm}
\section{Theoretical Analysis}
In this section, we analyze  convergence behavior of CD-DSL. Since the weighted parameters $\widetilde{\mathbf{w}}$ represents a linear combination of $\mathbf{w}$, the convergence of $\mathbf{w}$ can be derived by $\widetilde{\mathbf{w}}$ with bounded scalar $\mathbf{z}$, which  has been proven in~\cite{LiuMomentum2020}. As such, we discuss the convergence of  $\widetilde{\mathbf{w}}$. Due to the page limit, we provide only a proof sketch here.  
Details can be found in an extended version of this paper\footnote{The detailed proof is available at \url{https://github.com/zhuoyu-3/CD-DSL/blob/main/Consensus_based_Distributed_Swarm_Learning__8_page.pdf}.}.
Let $L$-smoothness, $\sigma^2$-dissimilarity, $\phi^2$-bounded gradient and bounded local update assumptions hold. \textbf{Ergodic convergence:} We define the common model as $\overline{\mathbf{w}}_t \triangleq \frac{1}{K}\sum_{i=1}^{K}\widetilde{\mathbf{w}}_{i,t}$, the global objective $F(\overline{\mathbf{w}}) \triangleq \frac{1}{K}\sum_{i=1}^{K} f_i(\mathbf{w})$. The common model updates follow $\overline{\mathbf{w}}_{t+1}-\overline{\mathbf{w}}_t=\overline{\boldsymbol{\xi}}_t-\alpha_t\overline{\mathbf{g}}_t,\overline{\mathbf{g}}_t \triangleq \frac{1}{K}\sum_{i=1}^{K}\nabla f_i(\widetilde{\mathbf{w}}_{i,t}),\overline{\mathbf{\xi}}_t \triangleq \frac{1}{K}\sum_{i=1}^{K}\mathbf{\xi}_{i,t}$, where $\boldsymbol\xi_{i,t}=c_0\mathbf v_{i,t}+c_1 b_{i,t}\bigl(\widetilde{\mathbf w}_{i,t-1}-\widetilde{\mathbf w}_{i,t}\bigr)+c_2\left(\sum_{j=1}^{K} p_{j,i}^{t}\widetilde{\mathbf w}_{j,t}-\widetilde{\mathbf w}_{i,t}\right)$. \textbf{Non-ergodic convergence:} We also define the disagreement $\mathbf{E}_t\triangleq\widetilde{\mathbf{W}}_t\left(\mathbf{I}-\frac{1}{K}\mathbf{1}\mathbf{1}^{\top}\right)$, consensus error $e_t^2\triangleq\frac{1}{K}\|\mathbf{E}_t\|_F^2=\frac{1}{K}\sum_{i=1}^{K}\|\widetilde{\mathbf{w}}_{i,t}-\overline{\mathbf{w}}_t\|^2$. For the local updates $\widetilde{\mathbf{W}}_{t+1}=\widetilde{\mathbf{W}}_t\mathbf{P}_t-\alpha_t \mathbf{G}_t+\boldsymbol{\Xi}_t,\mathbf{G}_t \triangleq\left[\nabla f_1(\widetilde{\mathbf{w}}_{1,t}),\ldots,\nabla f_K(\widetilde{\mathbf{w}}_{K,t})\right],
 \boldsymbol{\Xi}_t \triangleq\left[\boldsymbol{\xi}_{1,t},\ldots,\boldsymbol{\xi}_{K,t}\right]$ the ergodic convergence of  $\overline{\mathbf{w}}$ is guaranteed by \textbf{Theorem~\ref{the:ergodic_convergence}}.
\begin{theorem}[Ergodic convergence]
\label{the:ergodic_convergence}
Let $\alpha_t$ satisfy
\begin{equation}
0<\alpha_t\le \bar\alpha\triangleq\min\left\{\frac{1}{4L_{\max}},\frac{1-\lambda}{2\sqrt{2}L_{\max}}\right\},
\label{eq:thm1_stepsize}
\end{equation}
where $\lambda\triangleq \sup_t |\lambda_2(P_t)|<1$.
Then, for any $T\ge 1$,

\begin{flalign}
    && &\frac{\sum_{t=0}^{T-1}\alpha_t\,\mathbb E\|\nabla F(\overline{\mathbf{w}}_t)\|^2}{\sum_{t=0}^{T-1}\alpha_t}\le\frac{4\left( F(\overline{\mathbf{w}}_0)-F(\overline{\mathbf{w}}^{\star})\right)}{\sum_{t=0}^{T-1}\alpha_t}\nonumber\\
 && &+\frac{3\sum_{t=0}^{T-1}\,\mathbb E[e_t^2]/\alpha_t}{8\sum_{t=0}^{T-1}\alpha_t}+
3\sigma^2+\frac{6\sum_{t=0}^{T-1}\,\mathbb E[\beta_t^2]/\alpha_t}{\sum_{t=0}^{T-1}\alpha_t} & \label{eq:ergodic_bound}
\end{flalign}

\end{theorem}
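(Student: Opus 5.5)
The plan is a standard descent-lemma argument on the common model $\overline{\mathbf{w}}_t$, whose recursion is $\overline{\mathbf{w}}_{t+1}-\overline{\mathbf{w}}_t=\overline{\boldsymbol{\xi}}_t-\alpha_t\overline{\mathbf{g}}_t$. I read $\beta_t^2$ as a bound on the averaged swarm perturbation, $\|\overline{\boldsymbol{\xi}}_t\|^2\le\beta_t^2$, which is the bounded-local-update assumption. By $L_{\max}$-smoothness of $F$,
\begin{equation*}
F(\overline{\mathbf{w}}_{t+1})\le F(\overline{\mathbf{w}}_t)+\langle\nabla F(\overline{\mathbf{w}}_t),\overline{\boldsymbol{\xi}}_t-\alpha_t\overline{\mathbf{g}}_t\rangle+\tfrac{L_{\max}}{2}\|\overline{\boldsymbol{\xi}}_t-\alpha_t\overline{\mathbf{g}}_t\|^2 .
\end{equation*}
Next I split $\overline{\mathbf{g}}_t=\nabla F(\overline{\mathbf{w}}_t)+(\overline{\mathbf{g}}_t-\nabla F(\overline{\mathbf{w}}_t))$ and treat the resulting terms one at a time.

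\textbf{Step 1 (inner products).} Use the identity $-\langle a,b\rangle=\tfrac12(\|a-b\|^2-\|a\|^2-\|b\|^2)$ to obtain $-\tfrac{\alpha_t}{2}\|\nabla F(\overline{\mathbf{w}}_t)\|^2$ plus $\tfrac{\alpha_t}{2}\|\overline{\mathbf{g}}_t-\nabla F(\overline{\mathbf{w}}_t)\|^2$. Bound the cross term with $\overline{\boldsymbol{\xi}}_t$ by Young's inequality, $\langle\nabla F,\overline{\boldsymbol{\xi}}_t\rangle\le\tfrac{\alpha_t}{8}\|\nabla F\|^2+\tfrac{2}{\alpha_t}\|\overline{\boldsymbol{\xi}}_t\|^2$. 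This is what produces the $\beta_t^2/\alpha_t$ term.

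\textbf{Step 2 (gradient mismatch).} Write $\overline{\mathbf{g}}_t-\nabla F(\overline{\mathbf{w}}_t)=\frac1K\sum_i\bigl(\nabla f_i(\widetilde{\mathbf{w}}_{i,t})-\nabla f_i(\overline{\mathbf{w}}_t)\bigr)$. By Jensen's inequality and smoothness this is at most $L_{\max}^2e_t^2$.

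\textbf{Step 3 (quadratic term).} Bound $\|\overline{\boldsymbol{\xi}}_t-\alpha_t\overline{\mathbf{g}}_t\|^2\le 2\beta_t^2+2\alpha_t^2\|\overline{\mathbf{g}}_t\|^2$. Then bound $\|\overline{\mathbf{g}}_t\|^2$ by $3L_{\max}^2e_t^2+3\|\nabla F(\overline{\mathbf{w}}_t)\|^2+3\sigma^2$. The $\sigma^2$ piece comes from the dissimilarity assumption, which controls the deviation of the local gradients at $\overline{\mathbf{w}}_t$ from $\nabla F(\overline{\mathbf{w}}_t)$. This step is where the $3\sigma^2$ constant originates.

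\textbf{Step 4 (step-size conditions).} Apply $\alpha_t\le 1/(4L_{\max})$ so that $L_{\max}\alpha_t^2\cdot 3\le \tfrac{3\alpha_t}{4}\cdot\tfrac{1}{4}$. The coefficient of $-\alpha_t\|\nabla F\|^2$ then stays at least $1/4$, which gives the factor $4$ in front of $F(\overline{\mathbf{w}}_0)-F^\star$. The condition $\alpha_t\le(1-\lambda)/(2\sqrt2L_{\max})$ gives $L_{\max}^2\alpha_t^2\le(1-\lambda)^2/8\le 1/8$. With it, the $e_t^2$ contributions $\tfrac{\alpha_t}{2}L_{\max}^2e_t^2+3L_{\max}^3\alpha_t^2e_t^2$ can be rewritten in the form $c\,e_t^2/\alpha_t$. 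Multiplying by $4$ then yields the stated coefficient $3/8$, after dividing by $\sum\alpha_t$.

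\textbf{Step 5 (telescoping).} Take expectations and sum over $t=0,\dots,T-1$. Telescope using $F(\overline{\mathbf{w}}_T)\ge F(\overline{\mathbf{w}}^\star)$, multiply by $4$, and divide by $\sum_t\alpha_t$. The $\sigma^2$ term carries weight proportional to $\alpha_t$ (since $L_{\max}\alpha_t^2\le\alpha_t/4$), so it averages to a constant $3\sigma^2$.

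I expect the main obstacle to be the constant bookkeeping, specifically converting the $\alpha_t L_{\max}^2 e_t^2$ terms into the $e_t^2/\alpha_t$ form with exactly the coefficient $3/8$. This requires applying $\alpha_t^2L_{\max}^2\le 1/8$ in the right place, and also choosing the Young's-inequality weights so that the total loss in the $\|\nabla F\|^2$ coefficient stays within $3/4$. If the constants do not close, I would first retune the Young parameter in Step 1, for example $\alpha_t/4$ versus $\alpha_t/8$. Note that the spectral gap $\lambda$ enters here only through the step-size cap. A separate consensus recursion on $e_t$ is not needed for this ergodic statement, because $e_t^2$ is left explicit in the bound.
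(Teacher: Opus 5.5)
Your overall route (descent lemma on $\overline{\mathbf{w}}_t$, a gradient-mismatch bound, $\|\overline{\boldsymbol\xi}_t\|\le\beta_t$, telescoping) is the one the paper's sketch follows. However, the key step that makes the coefficient of $\|\nabla F(\overline{\mathbf{w}}_t)\|^2$ negative fails as you set it up. In Step 1 you discard the $-\frac{\alpha_t}{2}\|\overline{\mathbf{g}}_t\|^2$ term produced by the polarization identity. In Step 3 you then re-expand the quadratic term, which puts $\frac{L_{\max}}{2}\cdot 2\alpha_t^2\cdot 3\|\nabla F(\overline{\mathbf{w}}_t)\|^2=3L_{\max}\alpha_t^2\|\nabla F(\overline{\mathbf{w}}_t)\|^2$ on the right. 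Under $\alpha_t\le 1/(4L_{\max})$ this is only bounded by $\frac34\alpha_t\|\nabla F(\overline{\mathbf{w}}_t)\|^2$. Your bound $\frac{3\alpha_t}{16}$ would require $L_{\max}\alpha_t\le\frac1{16}$. At $\alpha_t=1/(4L_{\max})$ the net coefficient of $\alpha_t\|\nabla F(\overline{\mathbf{w}}_t)\|^2$ is $-\frac12+\frac18+\frac34=+\frac38$, so your inequality gives no descent at all. Even with your $\frac{3}{16}$, the net would be $-\frac{3}{16}\alpha_t$, not $-\frac14\alpha_t$, so the factor $4$ would not come out. The second cap does not rescue this: $\bar\alpha=1/(4L_{\max})$ whenever $\lambda\le 1-1/\sqrt2$, whereas your bookkeeping needs $\alpha_t\le 1/(24L_{\max})$ to reach the factor $4$. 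Retuning the Young weight cannot fix it either, because $3L_{\max}\alpha_t^2$ alone can reach $\frac34\alpha_t>\frac12\alpha_t$.

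The missing idea is to keep $-\frac{\alpha_t}{2}\|\overline{\mathbf{g}}_t\|^2$ and never expand $\|\overline{\mathbf{g}}_t\|^2$. You have $\frac{L_{\max}}{2}\|\overline{\boldsymbol\xi}_t-\alpha_t\overline{\mathbf{g}}_t\|^2\le L_{\max}\beta_t^2+L_{\max}\alpha_t^2\|\overline{\mathbf{g}}_t\|^2$, and $L_{\max}\alpha_t^2\le\frac{\alpha_t}{4}$, so the $\|\overline{\mathbf{g}}_t\|^2$ terms combine to something nonpositive. This is what $\alpha_t\le 1/(4L_{\max})$ is for in the paper's sketch. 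Now use the Young split $\langle\nabla F(\overline{\mathbf{w}}_t),\overline{\boldsymbol\xi}_t\rangle\le\frac{\alpha_t}{4}\|\nabla F(\overline{\mathbf{w}}_t)\|^2+\frac{1}{\alpha_t}\|\overline{\boldsymbol\xi}_t\|^2$, together with your Step 2, $(\alpha_tL_{\max})^2\le\frac1{16}$, and $L_{\max}\le\frac{1}{4\alpha_t}$. This gives
\[
\frac{\alpha_t}{4}\|\nabla F(\overline{\mathbf{w}}_t)\|^2\le F(\overline{\mathbf{w}}_t)-F(\overline{\mathbf{w}}_{t+1})+\frac{1}{32}\,\frac{e_t^2}{\alpha_t}+\frac54\,\frac{\beta_t^2}{\alpha_t}.
\]
Multiplying by $4$, taking expectations and telescoping gives the theorem with constants $\frac18\le\frac38$ and $5\le 6$, and the $3\sigma^2$ is pure slack. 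Relatedly, your Step 3 attribution of $\sigma^2$ to dissimilarity is spurious, since $\frac1K\sum_i\nabla f_i(\overline{\mathbf{w}}_t)=\nabla F(\overline{\mathbf{w}}_t)$ exactly; this is harmless, as it only loosens the bound. If you instead use the paper's assumed mismatch bound $2L_{\max}^2e_t^2+\sigma^2$, the same argument gives $\frac14\le\frac38$ and $2\sigma^2\le 3\sigma^2$.
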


\begin{proof}
We assume bounded swarm perturbation term $\|\boldsymbol\xi_{i,t}\|\le\left(|c_0|+|c_1|+2|c_2|\right)\overline{\nu}_t\triangleq\beta_t$  and bounded consensus error and heterogeneous bias $\left\|\overline{\mathbf{g}}_t - \nabla F(\overline{\mathbf{w}}_t)\right\|^2\le 2L_{\max}^2 e_t^2 + \sigma^2$\cite{yao2025multi}. Given the $L$-smoothness of $F$, the accumulated expectation of $\nabla F(\overline{\mathbf{w}}_t)$ is bounded. To ensure convergence, we design $\alpha_t\leq\frac{1}{4L_{max}}$ for a negative coefficient of $\|\nabla F(\overline{\mathbf{w}}_t)\|^2$.
\end{proof}

\begin{theorem}[Bounded consensus  bias]\label{thm_consensus}
Let the learning rate satisfy $0<\alpha_t \le \frac{1-\lambda}{2\sqrt{2}L_{\max}}$ and define the spectral gap based parameter $\rho_t \triangleq \lambda+\sqrt{2}L_{\max}\alpha_t < 1$. The consensus error is
\begin{flalign}
    && &e_{t+1}\le\left(\lambda+\sqrt{2}L_{\max}\alpha_t\right)e_t+\sqrt{2}\sigma\alpha_t+\beta_t
&\label{eq:consensus_recursion}\\
 && &e_t\le\left(\prod_{s=0}^{t-1}\rho_s\right)e_0+\sum_{s=0}^{t-1} \left(\prod_{\ell=s+1}^{t-1}\rho_\ell\right) \left(\sqrt{2}\sigma\alpha_s+\beta_s\right),
&\label{eq:consensus_bound}
\end{flalign}
where $\beta_t$ is the effective swarm-induced perturbation bound.
\end{theorem}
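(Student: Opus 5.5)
We obtain the one-step recursion \eqref{eq:consensus_recursion} by projecting the stacked update onto the disagreement subspace. Then we unroll it to get \eqref{eq:consensus_bound}.

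\textbf{Step 1: disagreement recursion.} Write $\mathbf{J}\triangleq\frac{1}{K}\mathbf{1}\mathbf{1}^{\top}$ and right-multiply $\widetilde{\mathbf{W}}_{t+1}=\widetilde{\mathbf{W}}_t\mathbf{P}_t-\alpha_t\mathbf{G}_t+\boldsymbol{\Xi}_t$ by $(\mathbf{I}-\mathbf{J})$. Since the push-sum correction makes the effective aggregation mean-preserving, we have $\mathbf{P}_t\mathbf{J}=\mathbf{J}$ (column averaging is preserved under the ratio consensus), and hence $\widetilde{\mathbf{W}}_t\mathbf{P}_t(\mathbf{I}-\mathbf{J})=\mathbf{E}_t(\mathbf{P}_t-\mathbf{J})$. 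This gives
\begin{equation*}
\mathbf{E}_{t+1}=\mathbf{E}_t(\mathbf{P}_t-\mathbf{J})-\alpha_t\mathbf{G}_t(\mathbf{I}-\mathbf{J})+\boldsymbol{\Xi}_t(\mathbf{I}-\mathbf{J}).
\end{equation*}
Taking Frobenius norms, dividing by $\sqrt{K}$, and applying the triangle inequality, we bound the three terms separately.

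\textbf{Step 2: the three terms.} The mixing term is at most $\lambda e_t$, using $\|\mathbf{P}_t-\mathbf{J}\|\le\lambda$ restricted to the disagreement subspace, with $\lambda=\sup_t|\lambda_2(\mathbf{P}_t)|$. The swarm term satisfies $\frac{1}{\sqrt K}\|\boldsymbol{\Xi}_t(\mathbf{I}-\mathbf{J})\|_F\le\frac{1}{\sqrt K}\|\boldsymbol{\Xi}_t\|_F\le\beta_t$ by the bounded perturbation assumption $\|\boldsymbol{\xi}_{i,t}\|\le\beta_t$ from the proof of Theorem~\ref{the:ergodic_convergence}.

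For the gradient term, $\frac{1}{K}\|\mathbf{G}_t(\mathbf{I}-\mathbf{J})\|_F^2=\frac1K\sum_i\|\nabla f_i(\widetilde{\mathbf w}_{i,t})-\overline{\mathbf g}_t\|^2$. Insert $\pm\nabla f_i(\overline{\mathbf w}_t)$ and $\pm\nabla F(\overline{\mathbf w}_t)$. Because the mean is the minimizer of squared deviations, the deviation from $\overline{\mathbf g}_t$ is at most the deviation from $\nabla F(\overline{\mathbf w}_t)$. Then $(a+b)^2\le 2a^2+2b^2$, $L_{\max}$-smoothness, and $\sigma^2$-dissimilarity give at most $2L_{\max}^2e_t^2+2\sigma^2$. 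Taking square roots with $\sqrt{a+b}\le\sqrt a+\sqrt b$ yields $\sqrt2 L_{\max}e_t+\sqrt2\sigma$. Multiplying by $\alpha_t$ and combining with the other two terms gives \eqref{eq:consensus_recursion}.

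\textbf{Step 3: unrolling.} The stepsize condition $\alpha_t\le\frac{1-\lambda}{2\sqrt2L_{\max}}$ gives $\rho_t\le\lambda+\frac{1-\lambda}{2}=\frac{1+\lambda}{2}<1$. Since every coefficient is nonnegative, induction on $t$ applied to $e_{t+1}\le\rho_te_t+(\sqrt2\sigma\alpha_t+\beta_t)$ yields \eqref{eq:consensus_bound}, with the convention that empty products equal one.

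\textbf{Main obstacle.} The delicate step is the mixing contraction $\|\mathbf{E}_t(\mathbf{P}_t-\mathbf{J})\|_F\le\lambda\|\mathbf{E}_t\|_F$. For a merely column-stochastic, time-varying $\mathbf{P}_t$, the quantity $|\lambda_2|$ does not in general bound the operator norm on the disagreement subspace, and $\mathbf{J}$ need not be preserved. I would first argue that the push-sum reweighting in \eqref{eq:push_sum} yields an effective mixing on $\widetilde{\mathbf W}$ that fixes $\mathbf 1$ on both sides. Failing that, I would interpret $\lambda$ as $\sup_t\|\mathbf{P}_t-\mathbf{J}\|$ restricted to $\mathbf{1}^{\perp}$, i.e., a contraction factor in a suitably weighted norm. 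The remaining steps are routine once this is fixed.
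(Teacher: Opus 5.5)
Your skeleton is the paper's own: right-multiply the stacked update by $\mathbf{I}-\mathbf{J}$ with $\mathbf{J}=\frac{1}{K}\mathbf{1}\mathbf{1}^{\top}$, bound the mixing, gradient and swarm pieces separately, and unroll with $\rho_t\le\frac{1+\lambda}{2}<1$. The step you flag as the main obstacle is also left open in the paper. Its sketch simply ``substitutes the spectral property'' $\|\mathbf{P}_t-\mathbf{J}\|\le\lambda$ with $\lambda=\sup_t|\lambda_2(\mathbf{P}_t)|$. Its identity $\mathbf{E}_{t+1}=\widetilde{\mathbf{W}}_t(\mathbf{P}_t-\mathbf{J})-\cdots$ tacitly needs $\mathbf{P}_t\mathbf{J}=\mathbf{J}$, exactly as yours does.

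So you lose nothing relative to the paper, but you should drop your first proposed justification: push-sum does not make the effective mixing on $\widetilde{\mathbf{W}}$ fix $\mathbf{1}$ on both sides. Standard ratio dynamics preserve a $\mathbf{z}$-weighted combination of the $\widetilde{\mathbf{w}}_{i,t}$, not their uniform average. In the paper's own convention the situation is worse. There $\mathbf{1}^{\top}\mathbf{P}_t=\mathbf{1}^{\top}$ by \eqref{eq:softmax-p} and $\mathbf{Z}_0$ is all ones, so $\mathbf{Z}_0\mathbf{P}_t=\mathbf{Z}_0$. Hence $\mathbf{Z}_t$ stays all ones, the correction is the identity, and the row sums $\sum_i p_{j,i}^t$ remain arbitrary.

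Your fallback, made precise, needs only column stochasticity. Write $\widetilde{\mathbf{W}}_t=\mathbf{E}_t+\overline{\mathbf{w}}_t\mathbf{1}^{\top}$ and use $\mathbf{1}^{\top}\mathbf{P}_t(\mathbf{I}-\mathbf{J})=\mathbf{1}^{\top}(\mathbf{I}-\mathbf{J})=\mathbf{0}$. This gives $\widetilde{\mathbf{W}}_t\mathbf{P}_t(\mathbf{I}-\mathbf{J})=\mathbf{E}_t(\mathbf{I}-\mathbf{J})\mathbf{P}_t(\mathbf{I}-\mathbf{J})$, which should replace your $\mathbf{E}_t(\mathbf{P}_t-\mathbf{J})$. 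The mixing term is then at most $\|(\mathbf{I}-\mathbf{J})\mathbf{P}_t(\mathbf{I}-\mathbf{J})\|\,e_t$. Take $\lambda\triangleq\sup_t\|(\mathbf{I}-\mathbf{J})\mathbf{P}_t(\mathbf{I}-\mathbf{J})\|<1$ as an explicit hypothesis, and your Steps 1--3 go through verbatim. What cannot be done is to deduce this from $\sup_t|\lambda_2(\mathbf{P}_t)|<1$. For non-normal column-stochastic $\mathbf{P}_t$ the relevant quantity is a singular value of the compression to $\mathbf{1}^{\perp}$, not an eigenvalue, and it can even exceed one.

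On the gradient term your route differs from the paper's sketch and is the sharper one. The paper passes to $\alpha_t\|\overline{\mathbf{g}}_t-\nabla F(\overline{\mathbf{w}}_t)\|$ and invokes its assumed bound $2L_{\max}^2e_t^2+\sigma^2$. However, the term that actually appears is $\frac{1}{\sqrt{K}}\|\mathbf{G}_t(\mathbf{I}-\mathbf{J})\|_F$, the spread of the local gradients about their mean. Under heterogeneous data this does not vanish at exact consensus, whereas $\|\overline{\mathbf{g}}_t-\nabla F(\overline{\mathbf{w}}_t)\|$ does, so the paper's intermediate inequality controls the wrong quantity. Your argument bounds the correct object: the mean minimizes squared deviations, you insert $\pm\nabla f_i(\overline{\mathbf{w}}_t)$, then apply smoothness and dissimilarity. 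It yields exactly the constants $\sqrt{2}L_{\max}$ and $\sqrt{2}\sigma$ in \eqref{eq:consensus_recursion}.
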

\begin{proof}
Multiplying both sides of local updates by
$\mathbf{I}-\frac{1}{K}\mathbf{1}\mathbf{1}^{\top},$
we obtain $\mathbf{E}_{t+1}=\widetilde{\mathbf{W}}_t\left(\mathbf{P}_t-\frac{1}{K}\mathbf{1}\mathbf{1}^{\top}\right)-\alpha_t\mathbf{G}_t \left(\mathbf{I}-\frac{1}{K}\mathbf{1}\mathbf{1}^{\top}\right)+\boldsymbol{\Xi}_t\left(\mathbf{I}-\frac{1}{K}\mathbf{1}\mathbf{1}^{\top}\right)$. Substituting the spectral property $\left\|
\mathbf{P}_t-\frac{1}{K}\mathbf{1}\mathbf{1}^{\top}\right\|\le \lambda$, we have $e_{t+1}\le\;\lambda e_t+\alpha_t\left\|\overline{\mathbf{g}}_t - \nabla F(\overline{\mathbf{w}}_t)\right\|+\|\boldsymbol\xi_{i,t}\|$. Given bounded consensus error and heterogeneous bias and bounded swarm perturbation term, \eqref{eq:consensus_recursion} is yielded. Moreover, the bounded  series $\{e_t\}$ will converge with $\rho_t<1$ and yield \eqref{eq:consensus_bound}.
\end{proof}
Then, given Theorem~\ref{the:ergodic_convergence} and \ref{thm_consensus}, we further investigate the non-ergodic convergence of CD-DSL as follows.
\begin{corollary}[Non-ergodic convergence]\label{cor_constant}
Let $\alpha_t = \alpha_0\sqrt{\frac{K}{T}},\beta_t \le \overline{\beta}\alpha_t,$~\cite{aketi2023global}
with $\alpha_0$ satisfying \eqref{eq:thm1_stepsize}. The participant in the  $K$-clients CD-DSL system achieves
\begin{align}
\frac{1}{T}\sum_{t=0}^{T-1}\mathbb{E}\|\nabla F(\widetilde{\mathbf{w}}_{i,t})\|^2=&\mathcal{O}\left(\frac{1}{\alpha_0\sqrt{KT}}\right)+\mathcal{O}(\sigma^2)\nonumber \\
&+\mathcal{O}\!\left(\frac{K\alpha_0^2(\sigma+\overline{\beta})^2}{T}\right).
\label{eq:nonergodic_rate}
\end{align}
and the consensus bias obeys
\begin{equation}
\limsup_{t\to\infty} e_t\le\frac{K\alpha_0^2(\sqrt{2}\sigma+\overline{\beta})} {T(1-\lambda-\sqrt{2}L_{\max}\alpha_0\sqrt{K/T})}.
\label{eq:constant_neighborhood}
\end{equation}
\begin{equation}
e_t =\mathcal{O}\left(\frac{1}{\alpha_0\sqrt{KT}}\right) + \mathcal{O}(\rho^t),
\qquad
\rho < 1.
\label{eq:bias_rate}
\end{equation}
\end{corollary}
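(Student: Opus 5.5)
The plan is to specialize Theorem~\ref{the:ergodic_convergence} and Theorem~\ref{thm_consensus} to the constant step size $\alpha_t\equiv\alpha\triangleq\alpha_0\sqrt{K/T}$ and the proportional perturbation $\beta_t\le\overline{\beta}\alpha$. We then pass from the averaged model $\overline{\mathbf{w}}_t$ to an individual participant $\widetilde{\mathbf{w}}_{i,t}$ via $L$-smoothness. For simplicity we assume $T\ge K$, so that $\alpha\le\alpha_0$. Then $\alpha$ inherits the step-size condition \eqref{eq:thm1_stepsize}, and $\rho\triangleq\lambda+\sqrt{2}L_{\max}\alpha\le(1+\lambda)/2<1$ is a constant contraction factor.

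\textbf{Consensus bias.} I would start with the consensus part, since the rate needs it. With constant $\alpha$, the recursion \eqref{eq:consensus_recursion} becomes $e_{t+1}\le\rho e_t+(\sqrt{2}\sigma+\overline{\beta})\alpha$. Unrolling it as in \eqref{eq:consensus_bound} and summing the geometric series gives
\[
e_t\le\rho^t e_0+\frac{(\sqrt{2}\sigma+\overline{\beta})\alpha}{1-\rho}.
\]
Taking $\limsup$ removes $\rho^t e_0$ and leaves a neighborhood of order $\alpha/(1-\lambda-\sqrt{2}L_{\max}\alpha)$. Substituting $\alpha=\alpha_0\sqrt{K/T}$ is meant to produce \eqref{eq:constant_neighborhood}. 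Note, however, that the direct computation gives a numerator linear in $\alpha$, namely $\alpha_0\sqrt{K/T}$, whereas the stated numerator $K\alpha_0^2/T=\alpha^2$ is quadratic. To obtain $\alpha^2$ I would need $\beta_t$ to be second order, as happens if the swarm velocities themselves scale with $\alpha$. I expect this reconciliation to be the main obstacle. My first attempt would be to keep the linear bound and note that, after squaring, $e_t^2=\mathcal{O}(\alpha^2(\sigma+\overline{\beta})^2)=\mathcal{O}(K\alpha_0^2(\sigma+\overline{\beta})^2/T)$. This is the quantity that actually enters the rate. Also, since $1/(\alpha_0\sqrt{KT})\ge\alpha^2$ for bounded $\alpha_0$, \eqref{eq:bias_rate} then follows from $e_t\le\rho^te_0+\mathcal{O}(\alpha)$ after bounding the squared form.

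\textbf{Gradient rate.} Next I would plug the constant step into \eqref{eq:ergodic_bound}, for which $\sum_t\alpha_t=T\alpha=\alpha_0\sqrt{KT}$. The terms are handled as follows.
\begin{itemize}
\item The initial-gap term gives $4(F(\overline{\mathbf{w}}_0)-F^\star)/(\alpha_0\sqrt{KT})$.
\item The heterogeneity term stays $3\sigma^2$.
\item The consensus term $\frac{3}{8}\sum_t e_t^2/(\alpha T\alpha)$ uses $e_t^2\le 2\rho^{2t}e_0^2+2(\sqrt2\sigma+\overline\beta)^2\alpha^2/(1-\rho)^2$. The transient part sums to $\mathcal{O}(e_0^2/(T\alpha^2))$, which is absorbed if we initialize at consensus, $e_0=0$, as is standard. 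The stationary part gives an $\mathcal{O}((\sigma+\overline\beta)^2)$ contribution.
\item The swarm term $6\sum_t\beta_t^2/\alpha_t/(T\alpha)\le6\overline\beta^2$ is likewise a constant-order term.
\end{itemize}
These constant-order terms must be matched to $\mathcal{O}(K\alpha_0^2(\sigma+\overline\beta)^2/T)$. That matching again requires the squared-order scaling discussed above, namely $\beta_t^2\lesssim\overline\beta^2\alpha_t^4$ and $e_t^2=\mathcal{O}(\alpha^4)$. I would make this the working interpretation of $\beta_t\le\overline\beta\alpha_t$, with $\overline\beta$ absorbing one further factor of $\alpha$.

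\textbf{From the average to a participant.} Finally, $L$-smoothness gives $\|\nabla F(\widetilde{\mathbf{w}}_{i,t})\|^2\le2\|\nabla F(\overline{\mathbf{w}}_t)\|^2+2L_{\max}^2\|\widetilde{\mathbf{w}}_{i,t}-\overline{\mathbf{w}}_t\|^2$. The last term is at most $2L_{\max}^2Ke_t^2$, or $2L_{\max}^2e_t^2$ if we average over $i$. Averaging over $t$ and inserting the consensus bound adds only terms of the same order as in \eqref{eq:nonergodic_rate}, and this completes the corollary.
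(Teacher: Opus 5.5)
Your route (substitute $\alpha_t\equiv\alpha=\alpha_0\sqrt{K/T}$ and $\beta_t\le\overline{\beta}\alpha_t$ into Theorems~\ref{the:ergodic_convergence} and~\ref{thm_consensus}) is the only one the paper indicates, since it presents the corollary as a direct consequence of those two theorems. Your computation $\limsup_t e_t\le(\sqrt{2}\sigma+\overline{\beta})\alpha/(1-\rho)$ is exactly what \eqref{eq:consensus_recursion} yields. But the proposal never closes the discrepancy you identify, and the proposed repair cannot close it. The bound \eqref{eq:constant_neighborhood} has numerator $\alpha^2(\sqrt{2}\sigma+\overline{\beta})$, and making $\beta_t$ second order does not help. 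The heterogeneity forcing term $\sqrt{2}\sigma\alpha_t$ in \eqref{eq:consensus_recursion} is first order in $\alpha_t$ however $\beta_t$ scales. So even with $\beta_t\le\overline{\beta}\alpha_t^2$ the recursion gives only $\limsup_t e_t\le(\sqrt{2}\sigma\alpha+\overline{\beta}\alpha^2)/(1-\rho)$, which for $\alpha<1$ is weaker than \eqref{eq:constant_neighborhood}. Hence your working assumption $e_t^2=\mathcal{O}(\alpha^4)$ cannot be derived. Squaring the linear bound gives an $\mathcal{O}(\alpha^2)$ statement about $e_t^2$, which is not \eqref{eq:constant_neighborhood}. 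For \eqref{eq:bias_rate} no squaring is needed: $\alpha_0\sqrt{K/T}=K\alpha_0^2\,(\alpha_0\sqrt{KT})^{-1}$ with $K\alpha_0^2\le K/(16L_{\max}^2)$, so the linear bound already gives it with a $K$-dependent constant. Your comparison of $1/(\alpha_0\sqrt{KT})$ with $\alpha^2$ is beside the point.

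The rate has the same defect. In \eqref{eq:ergodic_bound} the consensus error enters as $\sum_t e_t^2/\alpha_t$ divided by $\sum_t\alpha_t$, i.e.\ as the time average of $e_t^2/\alpha^2$. So your claim that $e_t^2=\mathcal{O}(K\alpha_0^2/T)$ is ``the quantity that actually enters the rate'' is wrong. Dividing by $\alpha^2$ leaves an $\mathcal{O}((\sigma+\overline{\beta})^2/(1-\lambda)^2)$ term, and the swarm term is $6\overline{\beta}^2$. Under the hypothesis actually stated, these constant $\overline{\beta}^2$ contributions do not appear in \eqref{eq:nonergodic_rate}. Replacing $\beta_t\le\overline{\beta}\alpha_t$ by $\beta_t\lesssim\overline{\beta}\alpha_t^2$ (and adding $e_0=0$ and $T\ge K$) therefore proves a different corollary, not this one.

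If you do adopt those hypotheses, the rate for $\overline{\mathbf{w}}_t$ goes through, but by a different mechanism than you describe. The $\sigma$-part of $e_t^2/\alpha^2$ is $\mathcal{O}(\sigma^2/(1-\lambda)^2)$ and is absorbed into $\mathcal{O}(\sigma^2)$. Only the $\overline{\beta}$-parts become $\mathcal{O}(\overline{\beta}^2\alpha^2)=\mathcal{O}(K\alpha_0^2\overline{\beta}^2/T)$, so $e_t^2=\mathcal{O}(\alpha^4)$ is neither true nor needed. Your smoothness transfer to $\widetilde{\mathbf{w}}_{i,t}$ is then fine, up to the factor $K$ you note.

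What your computation actually shows is the following. From Theorems~\ref{the:ergodic_convergence}--\ref{thm_consensus} as stated, \eqref{eq:constant_neighborhood} holds only with numerator $\alpha_0\sqrt{K/T}\,(\sqrt{2}\sigma+\overline{\beta})$. Likewise, \eqref{eq:nonergodic_rate} holds only with an additional $\mathcal{O}(\overline{\beta}^2)$ term. The paper, which only invokes those theorems, supplies nothing that would remove these terms. You should state this conclusion rather than silently strengthen the hypothesis.
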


\noindent\textbf{Remark.}
Theorem~\ref{the:ergodic_convergence} reveals the common model $\overline{\mathbf{w}}_t$ will converge to $\overline{\mathbf{w}}^{\star}$. Theorem~\ref{thm_consensus} and Corollary~\ref{cor_constant} show that local models $\{\mathbf{w}_i\}_{i=1}^{K}$ will converge to an $\mathcal{O}( \sum_{s=0}^{t-1} \left(\prod_{\ell=s+1}^{t-1}\rho_\ell\right) \left(\sqrt{2}\sigma\alpha_s+\beta_s\right))$-neighborhood of  $\overline{\mathbf{w}}^{\star}$.
\vspace{-0.1in}

\begin{figure*}[!t]
    \centering
    \subfloat[Ablation study under IID data\label{results:ablation_iid}]{
        \includegraphics[width=2in]{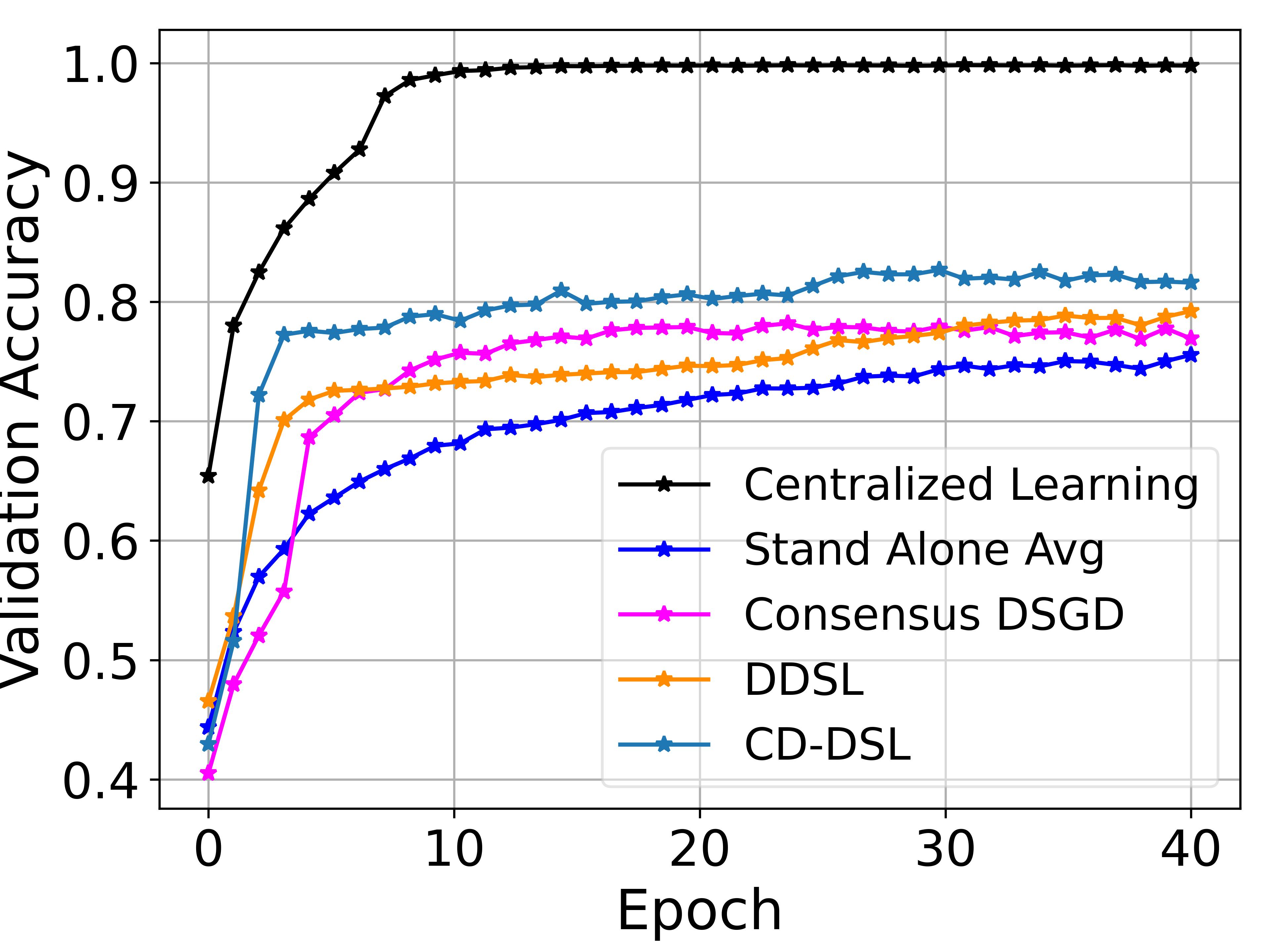}}%
    \hfil
    \subfloat[Ablation study under mild heterogeneity\label{results:ablation_mild}]{
        \includegraphics[width=2in]{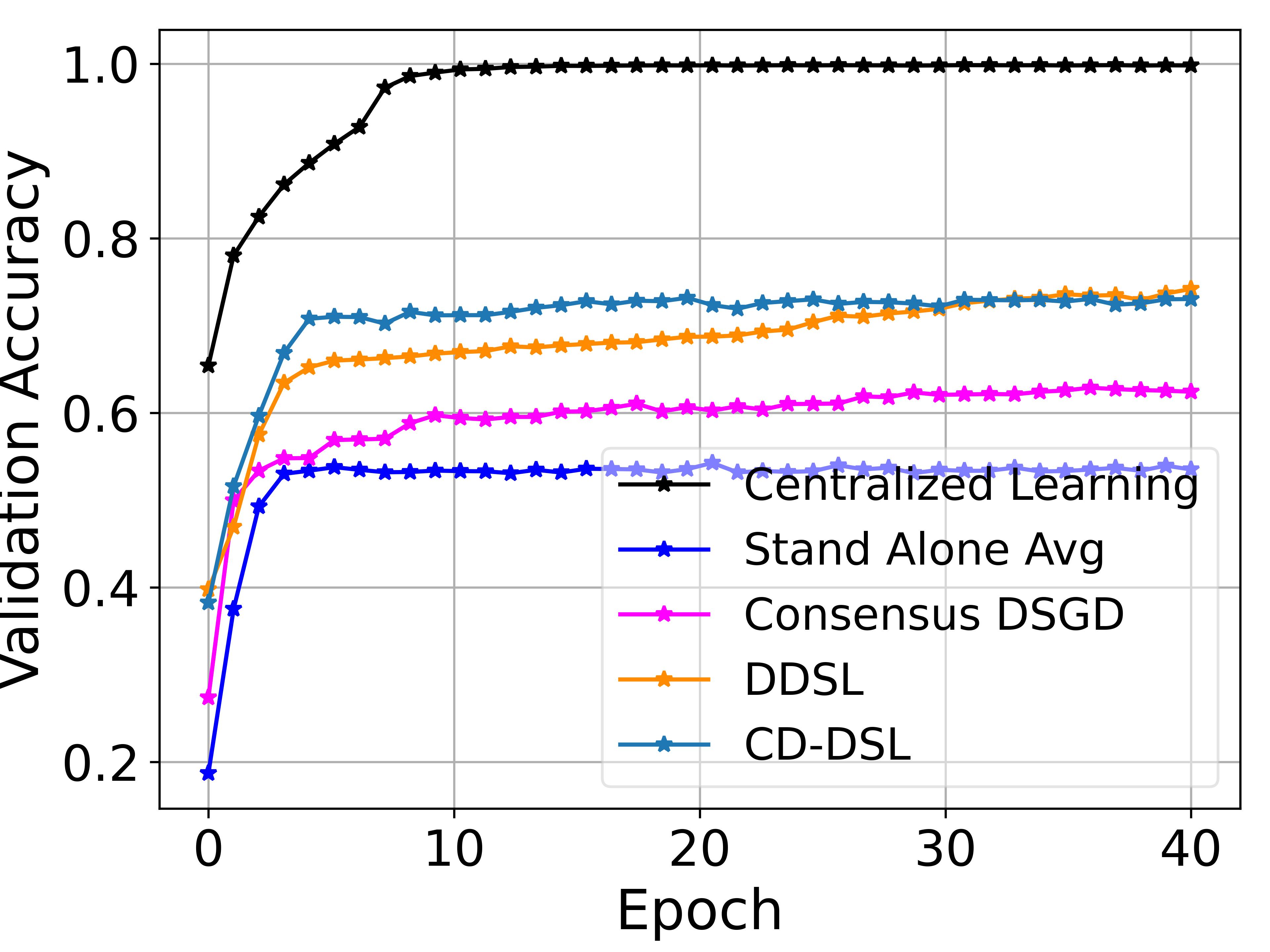}}%
    \hfil
    \subfloat[Ablation study under severe heterogeneity\label{results:ablation_severe}]{
        \includegraphics[width=2in]{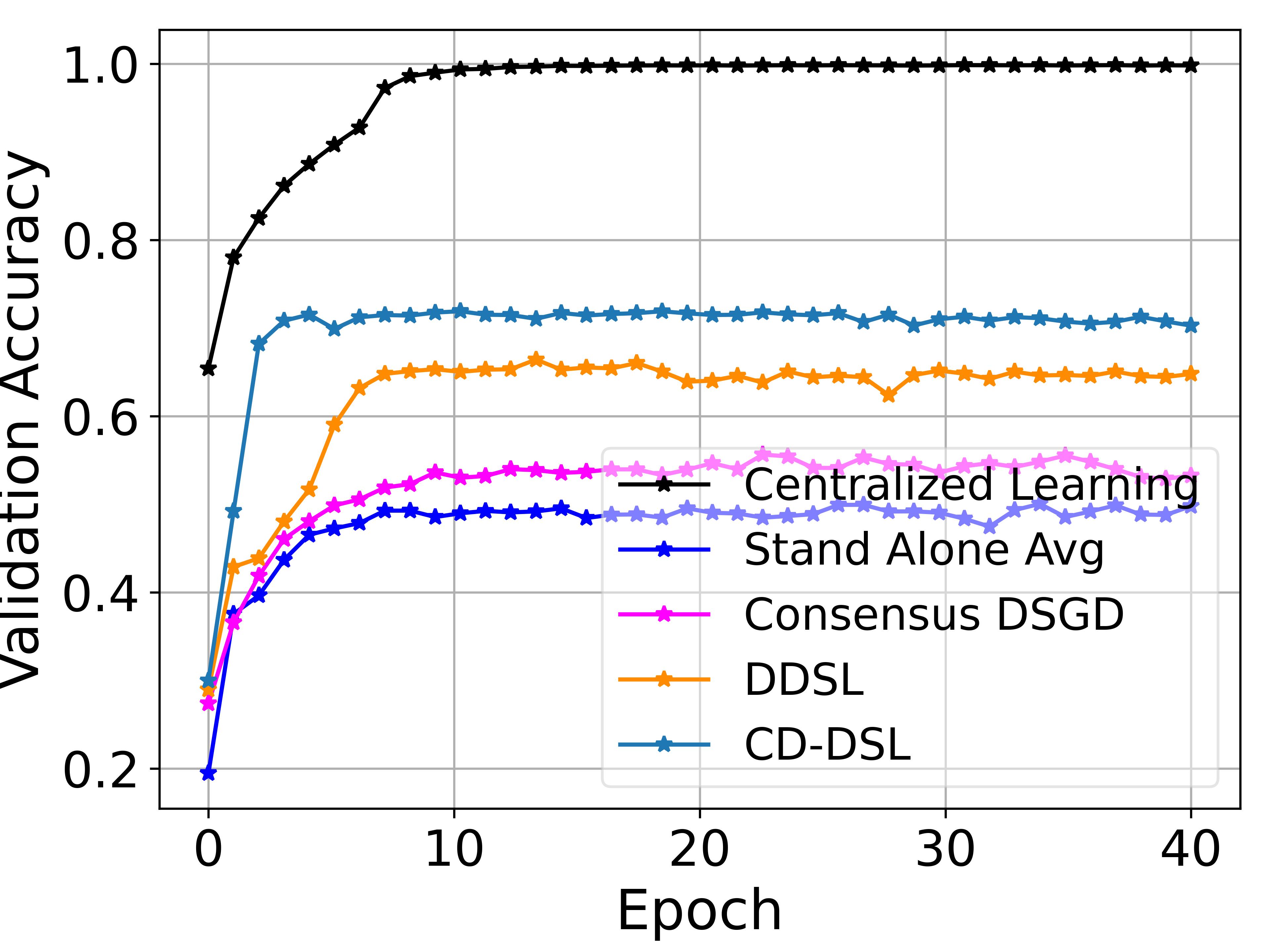}}\\
    \subfloat[Comparison under IID data\label{results:comp_iid}]{
        \includegraphics[width=2in]{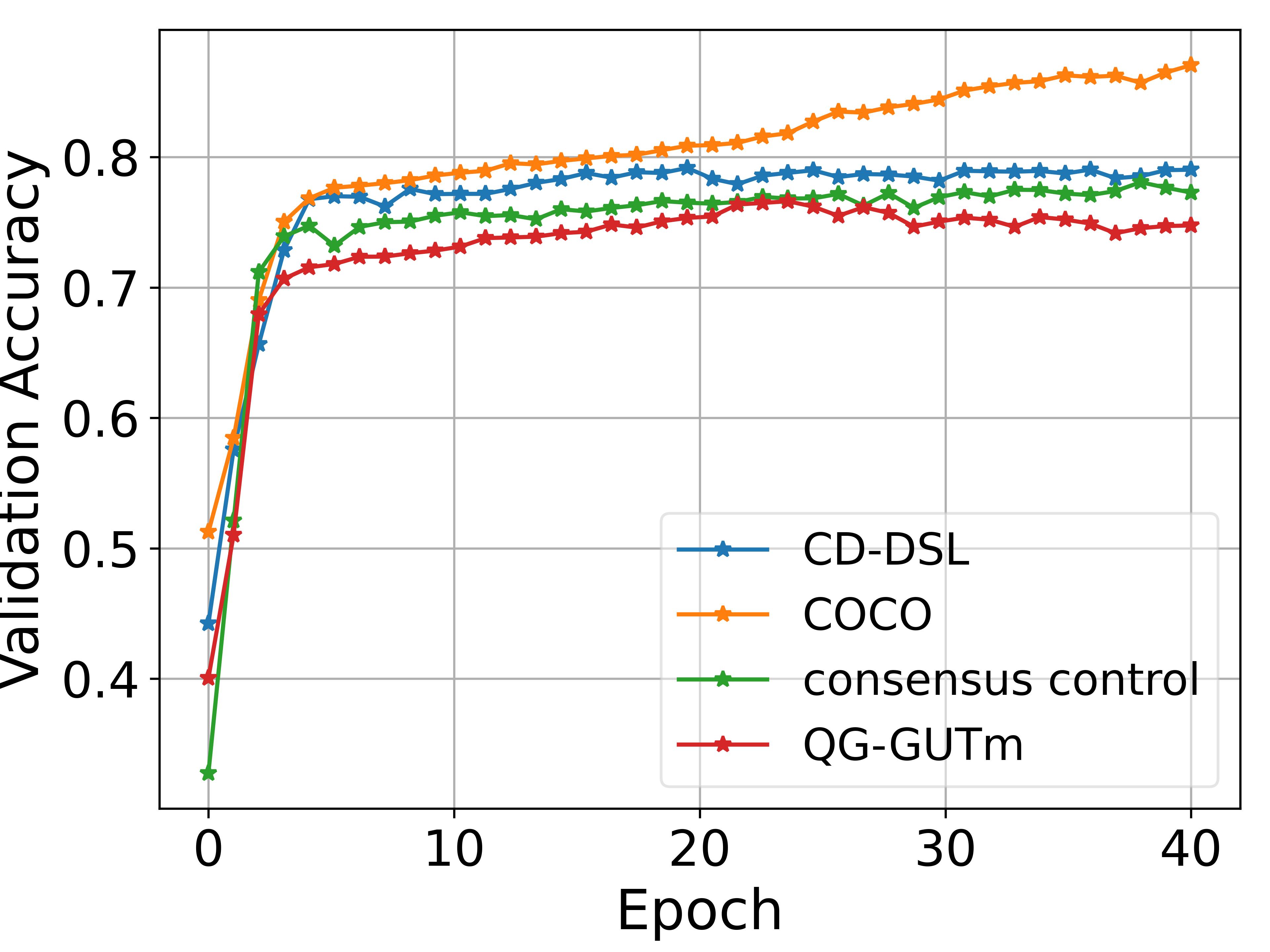}}%
    \hfil
    \subfloat[Comparison under mild heterogeneity\label{results:comp_mild}]{
        \includegraphics[width=2in]{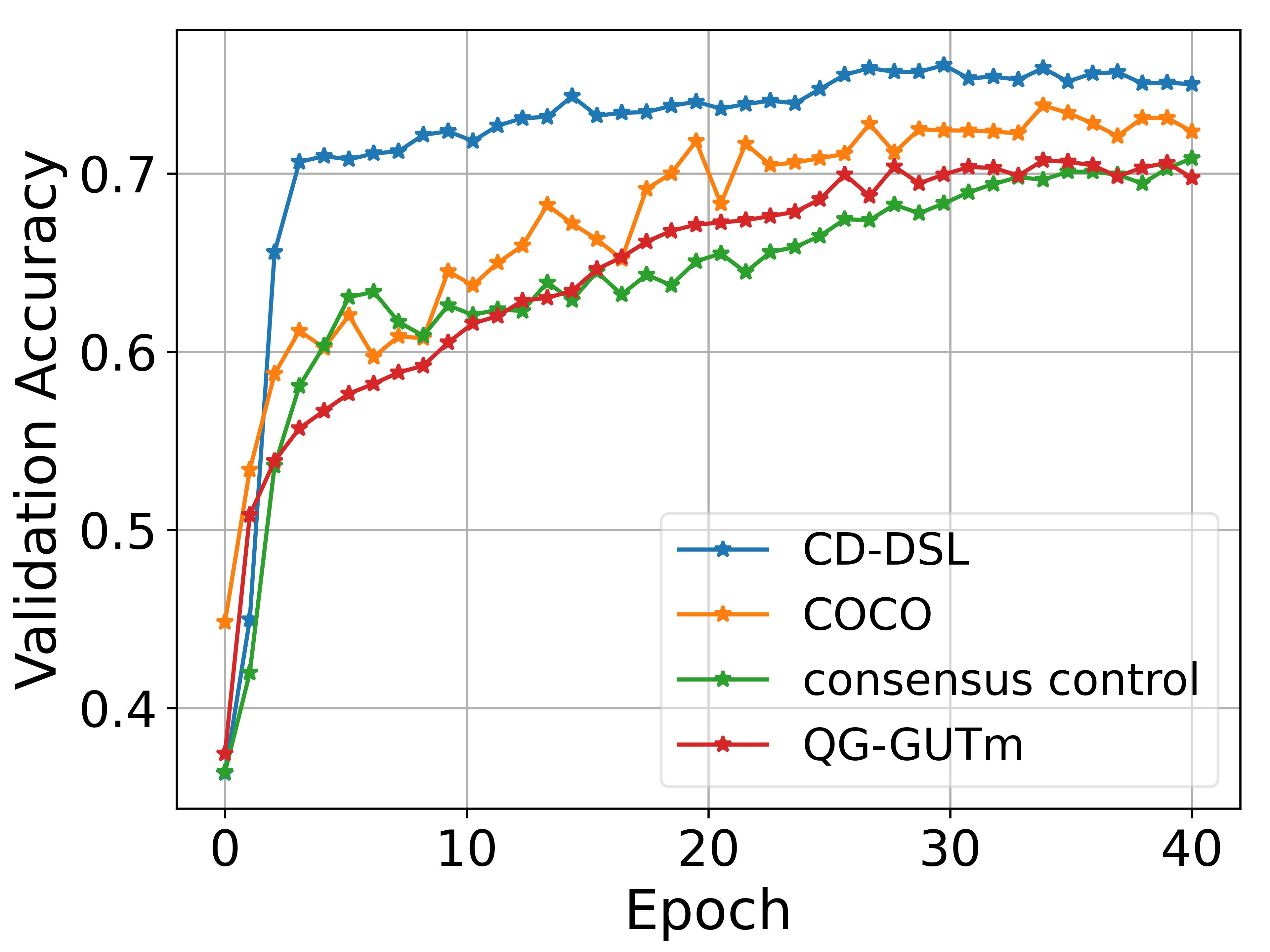}}%
    \hfil
    \subfloat[Comparison under severe heterogeneity\label{results:comp_severe}]{
        \includegraphics[width=2in]{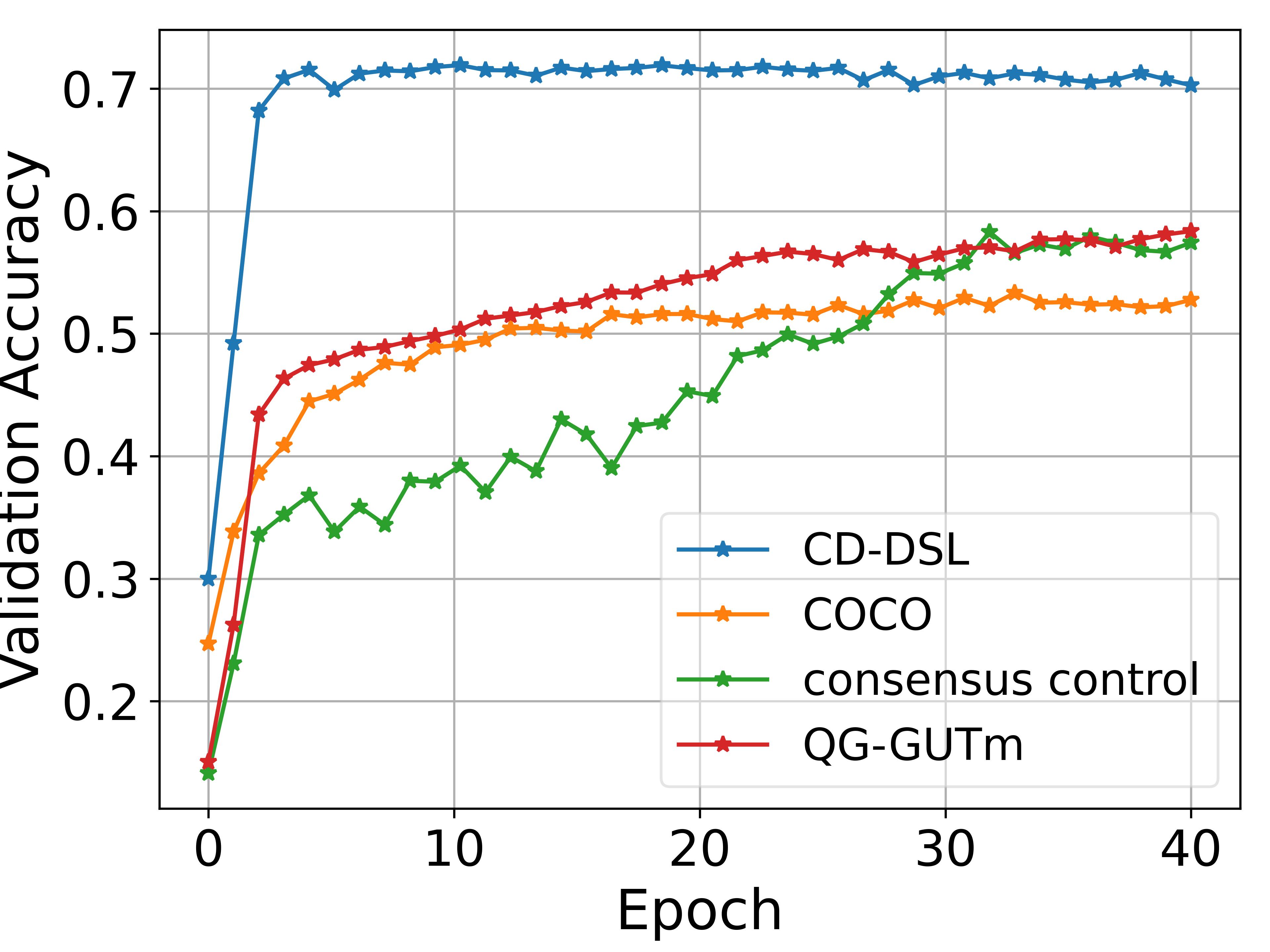}}%
    \vspace{-0.03in}
    \caption{Performance evaluation of the proposed CD-DSL compared with other benchmarks. 
    The first row test the ablation results of CD-DSL 
    under different data distributions. 
    The second row compares CD-DSL with other decentralized methods.}
    \label{results}
    \vspace{-0.15in}
\end{figure*}

\section{Simulation Results}
We test different decentralized learning methods via numerical experiments to train  ResNet18 for CIFAR10 image classification  with various settings. We use the cross-entropy loss to evaluate the non-convex optimization task.


\subsection{Experimental Settings}
We design a decentralized systems in Ubuntu  20.04.6 with 4 Nvidia V100-SXM2 GPUs to simulate the collaborative learning across $K=50$ edge devices\footnote{The code is available at {https://github.com/zhuoyu-3/CD-DSL}.}. Each contains an individual dataset ${\left| {{D_i}} \right|}=1000$ and a validation dataset ${\left| {{D_i^v}} \right|}=200$. Distributed data is deployed on a strongly connected decentralized edge network topology with a given connection rate of  $0.7$. The heterogeneous data generation relies on the Dirichlet distribution with concentration parameter $\alpha$. The sampled training and validation datasets share different label distributions across distributed devices. We design an IID case and two non-IID cases:  mild heterogeneity, where each tasks with $\alpha=0.5$ and severe heterogeneity, where 20 sets sampled by $\alpha=0.1$, 15 sets by $\alpha=0.5$, 10 sets by $\alpha=1$ and 5 sets by $\alpha=10$.

For the hyperparameter settings, we apply the SGD optimizer with attenuated learning rate $\alpha_{init}=0.01,\gamma=0.5$. The training process involves $40$ communication rounds on CIFAR10, each round includes $6$ epochs. The  batch size is $32$ for stable training. The hyperparameters of PSO updates follow $c_0 \sim U\left( {0,0.05(0.98)^{t}} \right)$ for controlled exploration and $c_1,c_2\sim \mathcal{N}\left( {0,0.15} \right)$ for balanced exploitation. The consensus weights in joint optimization  is set to $\tau=0.09$. Comparative experiments about COCO~\cite{wang2022accelerating}, consensus control~\cite{kong2021consensus}, and QG-GUTm~\cite{aketi2023global} are designed with optimal settings under given wireless topology. We evaluate and record the validation accuracy of the global model on an unseen validation dataset.

\subsection{Performance Evaluation and Comparison}

Fig.~\ref{results} reports the validation accuracy of CD-DSL compared with baseline 
methods under the IID case, mild heterogeneity, and the more challenging severe heterogeneity. 
The ablation results in Figs.~\ref{results:ablation_iid} to \ref{results:ablation_severe} show that the 
proposed CD-DSL consistently outperforms 
its ablated counterparts such as consensus-based DSGD\cite{yuan2016convergence} 
and decentralized distributed swarm learning (DDSL) as a straightforward but naive extension of CD-DSL arbitrarily applied in a decentralized manner\cite{wang2024distributed,fan2023cb}. 
Under the IID setting in Fig.~\ref{results:ablation_iid}, all decentralized variants can achieve stable convergence, while CD-DSL reaches higher accuracy with faster convergence than consensus-based DSGD. 
Such advantages become obvious as data heterogeneity increases. 
As shown in Fig.~\ref{results:ablation_mild}, CD-DSL maintains stable convergence and higher final accuracy, whereas DDSL saturates earlier. 
Such performance enhancement provided by CD-DSL beyond the benchmarks further increases under severe heterogeneity in Fig.~\ref{results:ablation_severe}, where CD-DSL continues to perform well but the ablated methods suffer from slower convergence and lower accuracy. 
These results indicate that consensus optimization improves model alignment, while DSL-enabled swarm updates enhance exploration and exploitation from non-IID data.

The comparison results in Figs.~\ref{results:comp_iid} to \ref{results:comp_severe} further validate the robustness of CD-DSL over the other decentralized baselines. 
Although COCO achieves competitive performance under the IID setting in Fig.~\ref{results:comp_iid}, CD-DSL 
is more robust against non-IID issues
as the data distribution becomes more heterogeneous. 
In the mild heterogeneity case, Fig.~\ref{results:comp_mild} shows that CD-DSL converges faster and achieves higher validation accuracy than COCO, consensus control, and QG-GUTm. 
This advantage becomes more evident in the severe heterogeneity case shown in Fig.~\ref{results:comp_severe}, where CD-DSL maintains a large performance margin over all baselines in terms of both convergence speed and converged accuracy. 
Overall, the results demonstrate that the integration of push-sum consensus, performance-aware mixing, and DSL-enabled swarm updates allows CD-DSL to achieve stronger model consistency and robustness in  decentralized learning with heterogeneous data.



\section{Conclusion}
This paper develops a consensus-based decentralized distributed swarm learning (CD-DSL) framework for edge intelligence with heterogeneous big data.
By integrating push-sum consensus, adaptive neighbor mixing, and bio-inspired model updates, the proposed CD-DSL enables decentralized learning over directed wireless topologies. Theoretical analysis shows that CD-DSL achieves  consensus and converges to a neighborhood of a stationary point under non-IID data distributions and non-convex objectives. Experimental results validate the robustness and efficiency of CD-DSL in heterogeneous decentralized learning scenarios.

\bibliographystyle{IEEEtran}
\bibliography{ref}

\end{document}